\newtheorem{thm}{Theorem}[section]
\newtheorem{lem}[thm]{Lemma}
\newtheorem{prop}[thm]{Proposition}
\theoremstyle{definition}
\theoremstyle{definition}
\newtheorem{defn}[thm]{Definition}
\theoremstyle{remark}
\newtheorem{rem}[thm]{Remark}
\numberwithin{equation}{section}
\newcommand{\rmnum}[1]{\romannumeral #1}
\newcommand{\Rmnum}[1]{\expandafter\@slowromancap\romannumeral #1@}
\newcommand\restr[2]{{% we make the whole thing an ordinary symbol
  \left.\kern-\nulldelimiterspace % automatically resize the bar with \right
  #1 % the function
 % \vphantom{\big|} % pretend it's a little taller at normal size
  \right|_{#2} % this is the delimiter
  }}
\begin{document}

\title{A necessary and sufficient condition for the existence of chaotic dynamics in an overlapping generations model}
\author{Tomohiro Uchiyama\\
Faculty of International Liberal Arts, Soka University,\\ 
1-236 Tangi-machi, Hachioji-shi, Tokyo 192-8577, Japan\\
\texttt{Email:t.uchiyama2170@gmail.com}}
\date{}
\maketitle 

\begin{abstract}
In this paper, we study economic dynamics in a standard overlapping generations model without production. In particular, using numerical methods, we obtain a necessary and sufficient condition for the existence of a topological chaos. This is a new application of a recent result characterising the existence of a topological chaos for a unimodal interval map by Deng, Khan, Mitra (2022). 
\end{abstract}

\noindent \textbf{Keywords:} Chaos, overlapping generations model, numerical method\\
\noindent JEL classification: D11, D41, D51 
\section{Introduction}
In this paper, we consider economic dynamics in a standard infinite horizon overlapping generations (OLG) model without production studied in~\cite{Samuelson-consumption-JPE},~\cite{Gale-dynamics-JET},~\cite{Benhabib-Erratic-JEDC} for example. In~\cite[Thm.~1]{Benhabib-Erratic-JEDC}, Benhabib and Day gave a sufficient condition for the existence of a Li-Yorke chaos (that always comes with periodic points of order three) for a fairly general OLG model. In the same paper, to illustrate their general result~\cite[Thm.~1]{Benhabib-Erratic-JEDC}, Benhabib and Day considered two examples of an OLG model~\cite[Sec.~3.4 (\rmnum{1})(\rmnum{3})]{Benhabib-Erratic-JEDC} and showed that some large parameter values in the utility functions are sufficient to generate chaotic dynamics. 

The main results of this paper (Theorems~\ref{main1} and~\ref{main2}) strengthen this result; namely, we give a \emph{necessary and sufficient} condition for the existence of a \emph{topological chaos} (or a \emph{turbulence}) for each of the OLG models in~\cite[Sec.~3.4 (\rmnum{1})(\rmnum{3})]{Benhabib-Erratic-JEDC}. 
We use a recent result characterising the existence of a topological chaos (or a turbulence) for a unimodal interval map in~\cite[Thms.~2 and~3]{Deng-TopChaos-JET}. We must admit that our argument in this paper is not completely rigorous. In the proofs of our main results (Theorems~\ref{main1} and~\ref{main2}), we had to rely on numerical computations by Python since it was too hard for us to obtain algebraic/analytic proofs. In the following, when we use a numerical method we explicitly say so.

Now we clarify what we mean by a turbulence and a chaos. (There are several definitions of a chaos in the literature.) The following definition is taken from~\cite[Chap.~\Rmnum{2}]{Block-book}:
\begin{defn}
Let $g$ be a continuous map of a closed interval $I$ into itself. We call $g$ \emph{turbulent} if there exist three points, $x_1$, $x_2$, and $x_3$ in $I$ such that $g(x_2)=g(x_1)=x_1$ and $g(x_3)=x_2$ with either $x_1<x_3<x_2$ or $x_2<x_3<x_1$. Moreover, we call $g$ \emph{(topologically) chaotic} if some iterate of $g$ is turbulent.  
\end{defn}
It is known that a map $g$ is chaotic (in our sense) if and only if $g$ has a periodic point whose period is not a power of $2$, see~\cite[Chap.~\Rmnum{2}]{Block-book}. This implies that a map $g$ is chaotic if and only if the topological entropy of $g$ is positive, see~\cite[Chap.~\Rmnum{8}]{Block-book}. See~\cite{Ruette-book} for more characterisations of chaos.

Here, we recall the following key result characterising the existence of a topological chaos and a turbulence~\cite[Thms.~2 and~3]{Deng-TopChaos-JET} since our main results are direct applications of this result. Let $\mathfrak{G}$ be the set of continuous maps from a closed interval $[a, b]$ to itself so that an arbitrary element $g\in \mathfrak{G}$ satisfies the following two properties:
\begin{enumerate}
\item{there exists $m\in (a,b)$ with the map $g$ strictly increasing on $[a,m]$ and strictly decreasing on $[m,b]$.}
\item{$g(a)\geq a$, $g(b)<b$, and $g(x)>x$ for all $x\in(a,m]$.}
\end{enumerate}
For $g\in \mathfrak{G}$, let $\Pi:=\{x\in [m,b]\mid g(x)\in [m,b] \textup{ and } g^2(x)=x\}$. Now we are ready to state~\cite[Thms.~2 and~3]{Deng-TopChaos-JET}:
\begin{prop}\label{ChaosThm}
Let $g\in \mathfrak{G}$. The map $g$ has an odd-period cycle if and only if $g^2(m) < m$ and $g^3(m) < \min\{x\in \Pi\}$ and the second iterate $g^2$ is turbulent if and only if $g^2(m) < m$ and $g^3(m) \leq \max\{x\in \Pi\}$. 
\end{prop}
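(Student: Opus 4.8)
\medskip
\noindent We sketch how one might prove Proposition~\ref{ChaosThm}. The plan is to recast both equivalences as statements about a $2$-horseshoe (for $g^{2}$) or about an odd-length loop of covering intervals (for $g$), governed by where the forward orbit of the turning point $m$ lies relative to the period-$1$ and period-$2$ structure of $g$ on the right lap $[m,b]$, and then to appeal to the \v{S}tefan--Sharkovskii covering-interval machinery recorded in \cite[Chap.~\Rmnum{2}]{Block-book}. For the set-up: since $g$ is strictly decreasing on $[m,b]$ with $g(m)>m$ (by~(2), as $m\in(a,m]$) and $g(b)<b$, it has a unique fixed point $p\in(m,b)$, and by~(2) none in $(a,m]$; hence $\Pi$ is nonempty and compact, and a short case check shows $\Pi$ consists of $p$ together with the right-lap points of the period-$2$ orbits contained in $[m,b]$, each such orbit straddling $p$. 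One computes that $g(b)>m$ forces $g^{2}(m)>m$, so $g^{2}(m)<m$ forces $g(b)\le m$; in that case pick $c\in(m,b)$ with $g(c)=m$, whereupon $p<c<g(m)$, $\Pi\subseteq(m,b]$, and $m<\min\Pi\le p\le\max\Pi$. The computation used throughout is: if $g^{2}(m)<m$ and $u\in\Pi$, then $[u,g(m)]\subseteq[m,b]$ with $u<c<g(m)$, the map $g^{2}$ sends $[u,c]$ increasingly onto $[u,g(m)]$ and $[c,g(m)]$ decreasingly onto $[g^{3}(m),g(m)]$, and so $g^{2}\bigl([u,g(m)]\bigr)=\bigl[\min\{g^{3}(m),u\},\,g(m)\bigr]$.

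\smallskip\noindent For the two ``if'' directions, assume $g^{2}(m)<m$ and put $w:=\max\Pi$ (taking $u=w$ above). Then $g^{2}$ maps $[w,c]$ onto $[w,g(m)]$ and maps $[c,g(m)]$ onto $[g^{3}(m),g(m)]$. If $g^{3}(m)\le w$, each of the two subintervals $[w,c]$ and $[c,g(m)]$ has $g^{2}$-image containing their union $[w,g(m)]$, so $g^{2}$ has a $2$-horseshoe; since $w$ is a fixed point of $g^{2}$ inside the lap $[w,c]$, one obtains $x_{1}=w$, a point $x_{2}\in[c,g(m)]$ with $g^{2}(x_{2})=w$, and a point $x_{3}\in(w,c]$ with $g^{2}(x_{3})=x_{2}$ and $w<x_{3}<x_{2}$, which is exactly turbulence of $g^{2}$. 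If moreover $g^{3}(m)<\min\Pi$, the orbit $m\mapsto g(m)\mapsto g^{2}(m)\mapsto g^{3}(m)$ has fallen below the $g$-invariant interval $[\min\Pi,g(\min\Pi)]$ on which $g$ is monotone; threading these orbit points together with that period-$2$ (or period-$1$) cycle into a cyclic chain of covering intervals yields a loop of odd length admitting no order-respecting refinement, so \cite[Chap.~\Rmnum{2}]{Block-book} produces a periodic orbit of odd period $>1$. This last step is the less routine of the two ``if'' arguments; the loop has odd length rather than $3$ precisely because $g$ need not have a $3$-cycle.

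\smallskip\noindent For the ``only if'' directions, argue by contraposition. If $g^{2}(m)\ge m$ then $g\bigl([m,g(m)]\bigr)=[g^{2}(m),g(m)]\subseteq[m,g(m)]$, so $[m,g(m)]$ is $g$-invariant with $g$ monotone on it; furthermore every periodic orbit of period $\ge2$ lies inside $[m,g(m)]$, because such an orbit cannot sit in $(a,m]$ (where $g(x)>x$ forbids it), cannot be a ``mixed'' period-$2$ pair straddling $m$ (which $g^{2}(m)\ge m$ rules out, since its right point would then have to exceed $\max g$), and, once a point lands in $[m,g(m)]$, its whole forward orbit stays there. Hence no periodic point has period $\ge3$: in particular $g^{2}$ has no $3$-cycle (a $3$-cycle of $g^{2}$ is a $g$-orbit of period $3$ or $6$), so $g^{2}$ is not turbulent, and $g$ has no odd cycle. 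If instead $g^{2}(m)<m$ but $g^{3}(m)>\max\Pi$ (resp.\ $g^{3}(m)\ge\min\Pi$), the computation shows $[\max\Pi,g(m)]$ (resp.\ $[\min\Pi,g(m)]$) is $g^{2}$-invariant, and $g^{2}$ restricted to it is a unimodal map whose critical value returns \emph{above} the relevant endpoint fixed point; after confining the recurrent set of $g^{2}$ to this interval, one concludes that the restricted map --- hence $g^{2}$ on $[a,b]$ --- is not turbulent (resp.\ carries no odd orbit). This is the crux of the argument.

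\smallskip\noindent The main obstacle is exactly this last confinement-and-non-turbulence step: when $g^{2}(m)<m$ there is no simple monotone cage for orbits, and ruling out a $2$-horseshoe for $g^{2}$ once its ``critical value'' $g^{3}(m)$ is high is in effect a renormalization/entropy estimate --- turbulence forces a $2$-horseshoe and hence topological entropy at least $\log 2$, while the hypothesis on $g^{3}(m)$ should cap the entropy of $g^{2}$ strictly below $\log 2$, or else one iterates the whole argument on the restricted unimodal map. A conceptually cleaner but more machinery-laden alternative is to route the proof through kneading theory: translate both equivalences into conditions on the leading symbols of the itinerary of $m$ under $g$ and its comparison with the period-$2$ kneading sequence (see \cite{Ruette-book}).
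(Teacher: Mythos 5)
You should first be aware that the paper contains no proof of Proposition~\ref{ChaosThm}: it is quoted from \cite[Thms.~2 and~3]{Deng-TopChaos-JET} and used as a black box, so your attempt has to be judged on its own merits rather than against an argument in the text. Much of your set-up is correct, and some pieces are complete: the description of $\Pi$ (the unique fixed point $p\in(m,b)$ together with the period-two orbits of $g$ contained in $[m,b]$), the implication $g^2(m)<m\Rightarrow g(b)<m$, the existence of $c\in(m,b)$ with $g(c)=m$, and the key computation $g^2([u,c])=[u,g(m)]$, $g^2([c,g(m)])=[g^3(m),g(m)]$ for $u\in\Pi$ are all right; taking $u=\max\Pi$ this does produce a triple $x_1=\max\Pi<x_3<x_2$ meeting the paper's definition of turbulence of $g^2$ whenever $g^2(m)<m$ and $g^3(m)\le\max\Pi$, and your treatment of the case $g^2(m)\ge m$ (every periodic orbit of period at least two lies in the $g$-invariant interval $[m,g(m)]$, on which $g$ is a decreasing homeomorphism, so only periods one and two occur, while turbulence of $g^2$ would force a period-three orbit of $g^2$) is sound, granting the standard fact that turbulent maps have periodic points of all periods.

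The remaining three implications are exactly the substance of the theorem, and none of them is proved. For the ``if'' half of the odd-cycle statement you offer only the phrase that threading $m,g(m),g^2(m),g^3(m)$ with a period-one or period-two orbit ``yields a loop of odd length admitting no order-respecting refinement'': no loop is exhibited, its length is not identified, the covering-interval lemma you would invoke gives a periodic point whose period merely divides the length of the loop, so you must still rule out that it is the fixed point $p$, and nowhere does the sketch actually use the strict inequality $g^3(m)<\min\Pi$ as opposed to $g^3(m)\le\max\Pi$ --- yet this is precisely what separates the two equivalences. For the ``only if'' halves when $g^2(m)<m$, the $g^2$-invariance of $[\max\Pi,g(m)]$ (resp.\ $[\min\Pi,g(m)]$) is correct, but both steps you build on it are asserted rather than argued: (i) turbulence of $g^2$ on $[a,b]$ is witnessed by intervals that have no reason to sit inside this invariant core, and an odd cycle of $g$ cannot lie inside it at all (the interval is only $g^2$-invariant, not $g$-invariant), so the ``confinement of the recurrent set'' needs a genuine argument; (ii) the proposed entropy cap $h(g^2)<\log 2$ under $g^3(m)>\max\Pi$ is not established by anything you write and is essentially as deep as the non-turbulence claim itself, as you concede by deferring to ``renormalization'' or kneading theory. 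As it stands, the proposal proves the easy directions and outlines a plausible strategy for the hard ones; the hard ones are precisely the content of \cite[Thms.~2 and~3]{Deng-TopChaos-JET}.
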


Finally, we state our main results. In this paper, we study two different dynamics generated by the following difference equations:
\begin{alignat}{2}
x_{t+1}&=f(x_t)=rx_te^{-x_t} \textup{ where $r>0$ and $x_t>0$ for any $t$.}\label{system1}\\
x_{t+1}&=h(x_t)=\lambda x_t(\alpha x_t+1)^{-\beta} \textup{ where $\lambda>0$, $\alpha>0$, $\beta>0$, and $x_t>0$ for any $t$.}\label{system2}
\end{alignat}
In Sections~\ref{SectionOLG} and~\ref{SectionProof}, we explain how the OLG models with particular utility functions yield these difference equations. (These are equations considered in~\cite[Sec.~3.4 (\rmnum{1})(\rmnum{3})]{Benhabib-Erratic-JEDC}.) In~\cite[Sec.~3.4 (\rmnum{1})(\rmnum{3})]{Benhabib-Erratic-JEDC}, the authors stated that a sufficient condition for the existence of a Li-Yorke chaos for the first system is $r>14.765$ and that for the second system is $\lambda \geq 50$ and $\beta\geq 5$. We show that both of these results are wrong (or not precise at least). Remember that a Li-Yorke chaos is stronger than a topological chaos. Here are our main results:

\begin{thm}\label{main1}
If $r>e$ in Equation (\ref{system1}), then the map $f$ has the following properties:
\begin{enumerate}
\item{there exists a closed interval $I$ such that $\restr{f}{I}\in \mathfrak{G}$.}
\item{$f$ has an odd period cycle if and only if $r>16.999$.}
\item{The second iterate $(\restr{f}{I})^2$ is turbulent if and only if $r\geq 16.999$.}
\end{enumerate}
\end{thm}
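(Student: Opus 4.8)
The plan is to derive all three assertions from elementary analysis of $f$ together with Proposition~\ref{ChaosThm}. First I would record that, since $f'(x)=re^{-x}(1-x)$, the map $f(x)=rxe^{-x}$ is strictly increasing on $[0,1]$ and strictly decreasing on $[1,\infty)$, has unique critical point $m=1$ with maximal value $f(1)=r/e$, and has exactly two fixed points, $0$ and $\ln r$, with $\ln r>1$ since $r>e$. For Part~1 I would take $I:=[0,r/e]$ and check directly: $m=1\in(0,r/e)$ (using $r/e>1$); $f$ maps $I$ into itself because $0\le f\le r/e$ throughout; $f(0)=0\ge 0$; $f(r/e)<r/e$, which is equivalent to $\ln r<r/e$ and holds because $r\mapsto r/e-\ln r$ vanishes at $r=e$ and has positive derivative $1/e-1/r$ for $r>e$; and $f(x)>x$ for $x\in(0,1]$, which is equivalent to $x<\ln r$ and holds because $\ln r>1$. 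Hence $\restr{f}{I}\in\mathfrak{G}$.

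For Parts~2 and~3 I would apply Proposition~\ref{ChaosThm} to $g:=\restr{f}{I}$ with $a=0$, $b=r/e$, $m=1$, for which $g^2(m)=f(r/e)=(r^2/e)e^{-r/e}$ and $g^3(m)=f^3(1)=f\big((r^2/e)e^{-r/e}\big)$. The standing hypothesis $g^2(m)<m$ of the proposition is $r^2e^{-r/e}<e$, i.e.\ $2\ln r-r/e<1$; since $r\mapsto 2\ln r-r/e$ equals $1$ at $r=e$ and is strictly decreasing for $r>2e$, this inequality holds for every $r$ beyond some threshold $r_0<10$, and in particular for all $r\ge 16.999$.

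The crux is to identify $\Pi=\{x\in[1,r/e]:f(x)\in[1,r/e],\ f^2(x)=x\}$; I claim that $\Pi=\{\ln r\}$ whenever $r>r_0$. The nonzero fixed point $\ln r$ lies in $(1,r/e)$ and clearly belongs to $\Pi$. For the ``obvious'' period-$2$ orbit, a short analysis of $f^2$ on $[0,r/e]$ — it is positive and stays above the diagonal on $(0,x_1]$, where $x_1\in(0,1)$ is the first preimage of $1$, then decreases strictly from $r/e$ down to $f^2(1)=f(r/e)<1$ on $[x_1,1]$ (here $r>r_0$ is used) — shows that the smaller point $p$ of a straddling $2$-cycle lies in $(x_1,1)$, hence $p<m$; its partner $q=f(p)$ then lies in $(1,r/e)$ but has $f(q)=f^2(p)=p<m$, so $q$ violates the requirement $f(q)\in[m,b]$. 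Thus no point of this orbit lies in $\Pi$, and, granting that $f$ has no other period-$2$ orbit, $\Pi=\{\ln r\}$, so $\min\Pi=\max\Pi=\ln r$. Plugging this into Proposition~\ref{ChaosThm} and using that $g^2(m)<m$ holds automatically for $r\ge 16.999$, one gets: $g$ has an odd-period cycle iff $f^3(1)<\ln r$, while $g^2$ is turbulent iff $f^3(1)\le\ln r$. Setting $\Phi(r):=f^3(1)-\ln r$, a numerical study shows that $\Phi$ has a single sign change, at $r^\ast\approx 16.999$, with $\Phi>0$ on $(r_0,r^\ast)$ and $\Phi<0$ on $(r^\ast,\infty)$ (for large $r$, $f^3(1)$ is exponentially small, so it never returns above $\ln r$). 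This yields exactly the claimed equivalences: odd-period cycles exist iff $r>16.999$, and $g^2$ is turbulent iff $r\ge 16.999$.

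I expect the two non-elementary points to be the main obstacles, and these are precisely where the paper resorts to Python: (a) justifying $\Pi=\{\ln r\}$ across the whole relevant parameter range — even the clean argument above leaves open the uniqueness of the period-$2$ orbit, which one would want to settle via the negative Schwarzian derivative of $f$ or numerically; and (b) pinning down the constant $16.999$, since $f^3(1)=\ln r$ is transcendental with no closed-form solution, and one must additionally certify that $\Phi$ does not change sign a second time.
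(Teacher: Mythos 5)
Your proposal is correct and follows the paper's overall strategy (verify membership in $\mathfrak{G}$, show $\Pi$ reduces to the fixed point $\ln r$, then convert the two conditions of Proposition~\ref{ChaosThm} into thresholds on $r$ numerically), but with one genuine structural difference: you take $I=[0,r/e]$, whereas the paper takes $I=[f^2(1),f(1)]$. Your choice makes part~1 entirely elementary — invariance of $I$ and the conditions $f(b)<b$, $f(x)>x$ on $(0,1]$ all follow from $\ln r>1$ and $\ln r<r/e$ — while the paper's smaller interval forces it to verify $f^2(1)\le f^3(1)$, which it can only do numerically (the inequality $\ln r-r^2e^{-r/e-1}>0$). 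Since both intervals share the same right endpoint $b=f(1)$, the set $\Pi$ is literally the same, so nothing is lost downstream; and because every periodic orbit of $f$ on $(0,\infty)$ lands in $[0,r/e]$ after one step, part~2 for $f$ itself is if anything easier to justify with your $I$. Your treatment of $\Pi$ is also somewhat more articulated than the paper's: you prove analytically (for $r$ past the threshold $r_0\approx 9.549$, i.e.\ $f^2(m)<m$) that a $2$-cycle straddling $m$ exists and that neither of its points can lie in $\Pi$ because the partner $q$ has $f(q)=p<m$, which is exactly the mechanism the paper reads off Figure~\ref{fig1}; what remains — that there is no second $2$-cycle contained in $[m,b]$, and the location of the crossing $f^3(1)=\ln r$ at $r\approx 16.999$ — is precisely what the paper also settles only by Python, so you have correctly isolated the non-rigorous steps rather than introduced new gaps. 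The one place to be slightly more explicit is the ``only if'' direction of part~2 on the range $e<r\le r_0$, where the argument is simply that the hypothesis $f^2(m)<m$ of Proposition~\ref{ChaosThm} fails, so no odd-period cycle can exist there; this is implicit in your single-sign-change discussion but worth stating.
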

Note that the condition $r>16.999$ in Theorem~\ref{main1} appeared in~\cite[Ex.~4.2]{Marotto-JMAA} as a sufficient condition for Equation (\ref{system1}) to generate a topological chaos although in~\cite[Ex.~4.2]{Marotto-JMAA} a different method (the existence of a \emph{snap-back repeller}) was used to generate a chaos. In Section~\ref{SectionProof}, we show that the assumption $r>e$ in Theorem~\ref{main1} is the minimum assumption to push $\restr{f}{I}$ into $\mathfrak{G}$. We do not know whether we can still generate chaotic dynamics without this condition. (that means without pushing $\restr{f}{I}$ into $\mathfrak{G}$)

\begin{thm}\label{main2}
For the system (\ref{system2}), fix $\alpha=1$ and $\beta=5$ ($\beta=10$ or $15$). If $\lambda>3.052$ ($\lambda>2.868$ or $\lambda>2.815$ respectively), then the map $h$ in Equation (\ref{system2}) has the following properties:
\begin{enumerate}
\item{there exists a closed interval $E$ such that $\restr{h}{E}\in\mathfrak{G}$.}
\item{$h$ has an odd period cycle if and only if $\lambda>85.08$ ($ \lambda>28.11$ or $\lambda>12.45$ respectively).}
\item{The second iterate $(\restr{h}{E})^2$ is turbulent if and only if 
$\lambda\geq 85.08$ ($\lambda\geq28.11$ or $\lambda\geq12.45$ respectively).}
\end{enumerate}
\end{thm}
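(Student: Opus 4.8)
The plan is to follow the template of the proof of Theorem~\ref{main1}, now with the explicit unimodal map $h(x)=\lambda x(x+1)^{-\beta}$ ($\alpha=1$), treating the three cases $\beta\in\{5,10,15\}$ in parallel since only the numerical constants change.

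First I would establish item~(1). Differentiating, $h'(x)=\lambda(x+1)^{-\beta-1}\bigl(1-(\beta-1)x\bigr)$, so $h$ is strictly increasing on $[0,m]$ and strictly decreasing on $[m,\infty)$ with the unique turning point $m=1/(\beta-1)$; the positive fixed point is $x^{*}=\lambda^{1/\beta}-1$, and for $\lambda>1$ one has $h(x)>x$ on $(0,x^{*})$ and $h(x)<x$ on $(x^{*},\infty)$. Since $m<x^{*}\iff\lambda^{1/\beta}>\beta/(\beta-1)\iff\lambda>(\beta/(\beta-1))^{\beta}$, and $(5/4)^{5}=3.0517\ldots$, $(10/9)^{10}=2.8679\ldots$, $(15/14)^{15}=2.8144\ldots$, the stated hypothesis gives $m<x^{*}$, hence $m<h(m)$ and, by monotonicity of $h$ on $[m,\infty)$, also $x^{*}<h(m)$. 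Setting $E=[0,h(m)]$, the map $\restr{h}{E}$ maps $E$ onto $E$ (because $0\le h\le h(m)$ on $[0,\infty)$ and $h(0)=0$), is unimodal with turning point $m\in(0,h(m))$, and satisfies $h(0)=0\ge 0$, $h(x)>x$ on $(0,m]\subset(0,x^{*})$, and $h(h(m))<h(m)$ (as $h(m)>x^{*}$); hence $\restr{h}{E}\in\mathfrak{G}$. As in Theorem~\ref{main1}, this threshold is optimal for forcing a restriction of $h$ into $\mathfrak{G}$: $m$ is the unique turning point of $h$, and $h(m)\le m$ whenever $\lambda\le(\beta/(\beta-1))^{\beta}$, so in that case condition~(2) in the definition of $\mathfrak{G}$ fails at the turning point.

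For items~(2) and~(3) I would apply Proposition~\ref{ChaosThm} with $g:=\restr{h}{E}$. This requires the set $\Pi=\{x\in[m,h(m)]\mid g(x)\in[m,h(m)]\text{ and }g^{2}(x)=x\}$ and the iterates $g^{2}(m)=h(h(m))$ and $g^{3}(m)=h^{3}(m)$, all regarded as functions of $\lambda$. Here $x^{*}\in\Pi$ always (since $m<x^{*}<h(m)$ and $g(x^{*})=x^{*}$); the only other candidates are the points of genuine $2$-cycles of $h$, each of the form $p<x^{*}<q$, and the two points of such a cycle lie in $\Pi$ exactly when its lower point satisfies $p\ge m$. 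I would check numerically that this never occurs in the parameter range at issue, so that $\Pi=\{x^{*}\}$ and $\min\Pi=\max\Pi=x^{*}=\lambda^{1/\beta}-1$. Proposition~\ref{ChaosThm} then reads: $g$ has an odd-period cycle iff $g^{2}(m)<m$ and $g^{3}(m)<x^{*}$, while $g^{2}$ is turbulent iff $g^{2}(m)<m$ and $g^{3}(m)\le x^{*}$. The last step is to verify numerically that $g^{2}(m)<m$ and $g^{3}(m)<x^{*}$ hold simultaneously precisely when $\lambda>\lambda_{0}$, and that $g^{2}(m)<m$ and $g^{3}(m)\le x^{*}$ hold simultaneously precisely when $\lambda\ge\lambda_{0}$, where $\lambda_{0}=85.08$ for $\beta=5$ (and $\lambda_{0}=28.11$, $12.45$ for $\beta=10,15$); items~(2) and~(3) follow. (Because $h\le h(m)$ everywhere, any periodic orbit of $h$ already lies in $E$, which is why~(2) may be phrased for $h$ itself rather than for $\restr{h}{E}$.)

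The main obstacle is exactly what the introduction flags as non-rigorous. Pinning down $\Pi$ amounts to controlling the period-two orbits of $h$, i.e.\ the solutions of $h^{2}(x)=x$, a high-degree equation in $x$ with parameter $\lambda$ and no usable closed form, and then checking that the lower branch of this solution set stays below $1/(\beta-1)$ throughout the range; likewise, showing that the two conjunctions above switch on at exactly $\lambda_{0}$ requires understanding the sign of $x^{*}-g^{3}(m)$, and $g^{3}(m)$ is a triple composition of $h$ that resists algebraic treatment. These steps I would carry out by numerical computation in Python, and the quoted decimals $85.08$, $28.11$, $12.45$ (as well as $3.052$, $2.868$, $2.815$) are the numerical roots of the corresponding equations.
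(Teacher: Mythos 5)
Your overall route for items (2) and (3) is the paper's: restrict $h$ to an interval on which it lies in $\mathfrak{G}$, invoke Proposition~\ref{ChaosThm}, show $\Pi$ collapses to the fixed point $x^{*}=\lambda^{1/\beta}-1$, and locate the thresholds $85.08$, $28.11$, $12.45$ numerically. Where you genuinely differ is item (1): you take $E=[0,h(m)]$ and verify membership in $\mathfrak{G}$ by a purely algebraic argument from $\lambda>(\beta/(\beta-1))^{\beta}$ (which you correctly identify as the closed form of $3.052$, $2.868$, $2.815$, i.e.\ condition (\ref{condition1})), whereas the paper takes $E=[h^{2}(m),h(m)]$ and must check $h^{2}(m)\le h^{3}(m)$ and $h^{2}(m)<h(m)$ numerically for each $\beta$. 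Your choice buys a cleaner, non-numerical proof of item (1); your side remark that every periodic orbit of $h$ already lies in $E$ (so item (2) can be stated for $h$ itself) addresses a point the paper leaves implicit. The only caveat is cosmetic: the model has $x_t>0$, so if you want $E$ inside the economic domain replace $0$ by a small $\varepsilon>0$; nothing changes. Also, your derivative $h'(x)=\lambda(x+1)^{-\beta-1}\bigl(1-(\beta-1)x\bigr)$ is the correct one (the paper's displayed $h'$ carries a spurious factor $\beta$, harmless for the sign analysis).

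The one genuine slip is the blanket claim that the lower point $p$ of a $2$-cycle never satisfies $p\ge m$ ``in the parameter range at issue'', so that $\Pi=\{x^{*}\}$ throughout. On the theorem's full hypothesis range this is false: the flip bifurcation occurs at $\lambda=(\beta/(\beta-2))^{\beta}$ (about $12.9$, $9.3$, $8.6$ for $\beta=5,10,15$), and for $\lambda$ between that value and roughly $20.45$ ($12.97$, $11.57$) the whole $2$-cycle lies in $[m,h(m)]$, so $\Pi$ is not a singleton; your proposed numerical check would come back negative there. The paper handles this by first proving $h^{2}(m)<m$ if and only if $\lambda>20.45$ ($12.97$, $11.57$) (Lemma~\ref{secondLem2}) and only on that subrange showing $\Pi=\{x^{*}\}$ (Lemma~\ref{thirdLem2}). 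Your final conclusion survives, because on the excluded subrange the first condition $g^{2}(m)<m$ of Proposition~\ref{ChaosThm} already fails, so the conjunction is false no matter what $\min\Pi$ is; but as written your substitution of $x^{*}$ for $\min\Pi$ is unjustified there, so you should either adopt the paper's ordering or add the explicit remark that $g^{2}(m)\ge m$ blocks both conclusions on that subrange. Incidentally, your own criterion ``the cycle lies in $\Pi$ exactly when $p\ge m$'' explains the coincidence the paper flags in its Remark: the $\lambda$ at which $p$ crosses $m$ is precisely the $\lambda$ at which $h^{2}(m)=m$, which is why the thresholds in Lemmas~\ref{secondLem2} and~\ref{thirdLem2} agree.
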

Note that in Theorem~\ref{main2}, our bound for $\lambda$ to generate a topological chaos when $\beta=5$ is tighter than the bound in~\cite[Sec.~3.4 (\rmnum{3})]{Benhabib-Erratic-JEDC} ($\lambda>85.08$ vs $\lambda>50$). Also, in~\cite[Sec.~3.4 (\rmnum{3})]{Benhabib-Erratic-JEDC} the authors stated the bound without specifying the $\alpha$ value in Equation (\ref{system2}). In general, the bound for $\lambda$ changes if $\alpha$ changes. Possibly, Benhabib and Day might have used a different $\alpha$ value to obtain the bound for $\lambda$. (We used $\alpha=1$.)

Also note that in Theorem~\ref{main2}, we picked some particular $\beta$ values. Our choices of $\beta$ values are ad hoc: we picked some $\beta$ values just to guess a relationship between $\beta$ and $\lambda$. A general pattern is that when $\beta$ is small, $\lambda$ needs to be large (and vice versa). Ideally, we should obtain a general (algebraic) relation between $\beta$ and $\lambda$, but that turned out to be too difficult. So we picked particular $\beta$ values and relied on some numerical methods to obtain the corresponding $\lambda$ values.  

It is well known that these two systems (\ref{system1}) and (\ref{system2}) are widely studied in economics, in mathematical biology, or in dynamical systems in general, see~\cite{Hassel-Patterns-Animal},~\cite{May-Bifurcation-Naturalist} for example. In dynamical systems, these difference functions are 
often studied together since the system (\ref{system1}) is a limiting case of the system (\ref{system2}) (Take the limits $\beta\rightarrow \infty$, $\alpha\rightarrow 0$ while fixing $\alpha\beta=1$). That is why we study these two systems together in the this paper.

\section{OLG model}\label{SectionOLG}
We quickly review how an OLG model generates economic dynamics following~\cite[Sec.~2]{Benhabib-Erratic-JEDC} and~\cite[Sec.~2]{Gale-dynamics-JET}. We consider a pure consumption-loan model between a population of overlapping generations. To simplify the argument, we assume the following: 1.~ The population does not grow. 2.~Each individual lives for two periods. 3.~Each individual has the same utility function $U(c_0(t),c_1(t+1))$ where $c_0(t)$ and $c_1(t+1)$ are the consumption levels of the individual when the individual is young and when the individual is old respectively. 4.~Each individual receives the same endowment $w_0$ when the individual is young and $w_1$ when the individual is old. We write $\rho_t$ for the interest rate at time $t$. Then each individual faces the budget constraint, that is 
\begin{equation}\label{budget}
c_1(t+1)=w_1+\rho_t[w_0-c_0(t)], \; c_0(t)\geq 0, \; c_1(t+1)\geq 0. 
\end{equation}
For each period $t$, the following market clearing condition needs to be satisfied:
\begin{equation}\label{market}
w_0-c_0(t)+w_1-c_1(t)=0.
\end{equation}

We make further assumptions to generate a well-defined dynamics. First, we assume that the utility function $U(c_0(t),c_1(t+1))$ is strictly concave, twice differentiable, increasing in its arguments and separable. Second, we assume that for each $t$, $c_0(t)>0$ and  $c_1(t)>0$. Under these assumptions, each individual tries to maximise the utility under the budget constraint, then we obtain the first order condition: 
\begin{equation}\label{firstOrder}
\rho_t=\frac{U_0(c_0(t),c_1(t+1))}{U_1(c_0(t),c_1(t+1))} \textup{ where } U_0 \textup{ and } U_1 \textup{ are partial derivatives of $U$.}
\end{equation}
Substitute Equation (\ref{firstOrder}) into the budget constraint (\ref{budget}), we obtain
\begin{equation}\label{Gale}
\frac{U_0(c_0(t),c_1(t+1))}{U_1(c_0(t),c_1(t+1))} = \frac{w_1-c_1(t+1)}{c_0(t)-w_0}.
\end{equation}

Now we make the final important assumption, that is,  $c_0(t)>w_0$ for each $t$. This means that people consume more than their income when they are young and less when they are old. Gale called this the \emph{classical case} in~\cite{Gale-dynamics-JET}. In the classical case, we can solve Equation (\ref{Gale}) for $c_1(t+1)$ uniquely by~\cite[Thm.~5]{Gale-dynamics-JET}. Then we can write
\begin{equation}\label{CMRS}
V(c_0(t),w_0,w_1):=\frac{U_0(c_0(t),c_1(t+1))}{U_1(c_0(t),c_1(t+1))}
\end{equation}
Following~\cite[Sec.~2]{Benhabib-Erratic-JEDC}, we call $V(c_0(t),w_0,w_1)$ \emph{a constrained marginal rate of substitution} (CMRS). Finally, combining Equations (\ref{CMRS}) and (\ref{market}), we obtain the economic dynamics we study in this paper (a difference equation in terms of $c_0$):
\begin{equation}\label{dynamics}
c_0(t+1)=w_0 + V(c_0(t),w_0,w_1)(c_0(t)-w_0).
\end{equation}

In~\cite[Sec.~3]{Benhabib-Erratic-JEDC}, roughly speaking, Benhabib and Day showed that if a CMRS varies sufficiently when $c_0$ varies, then Equation (\ref{dynamics}) possesses a chaotic behavior, see~\cite[Sec.~3.2 and Thm.1]{Benhabib-Erratic-JEDC} for the precise statement. 

Next, we derive Equations (\ref{system1}) and (\ref{system2}) from Equation (\ref{dynamics}) using two particular utility functions. We keep all the assumptions on $c_0(t)$, $c_1(t+1)$, and $w_0$ above. Following~\cite[Sec.~3.4 (\rmnum{1})]{Benhabib-Erratic-JEDC}, we set 
\begin{equation*}
U(c_0(t),c_1(t+1))= A - e^{a\left(1-(c_0(t)-w_0)/a\right)}+c_1(t+1) \textup{ where $A>0$ and $a>0$}.  
\end{equation*}
Then using Equation (\ref{dynamics}), we obtain the difference equation 
\begin{equation*}
c_0(t+1)-w_0=e^{a\left(1-(c_0(t)-w_0)/a\right)}(c_0(t)-w_0). 
\end{equation*}
By the changes of variables (setting $c_0(t)-w_0=x_t$), the dynamics expressed by the last formula is (topologically) equivalent to
\begin{equation*}
x_{t+1}=f(x_t)=e^{a\left(1-(x_t/a)\right)}x_t=r x_t e^{-x_t} \textup{ where $r>0$.}
\end{equation*}
This is our Equation (\ref{system1}). Note that $x_t>0$ for any $t$ since we consider the classical case. 

Finally, we consider Equation (\ref{system2}). Following~\cite[Sec.~3.4 (\rmnum{3})]{Benhabib-Erratic-JEDC}, we set 
\begin{equation*}
U(c_0(t),c_1(t+1))= \frac{\lambda(c_0(t)+b)^{1-\beta}}{1-\beta}+c_1(t)\textup{ where $\beta\geq 0, \beta\neq 1, \lambda>0, b\geq 0$}.  
\end{equation*}
Using Equation (\ref{dynamics}), we obtain the difference equation 
\begin{equation*}
c_0(t+1)-w_0=\lambda\frac{c_0(t)-w_0}{((c_0(t)-w_0)+b+w_0)^\beta}. 
\end{equation*}
By the changes of variables (setting $c_0(t)-w_0=x_t$) and setting $\frac{1}{b+w_0}=\alpha$, the dynamics expressed by the last formula is (topologically) equivalent to
\begin{equation*}
x_{t+1}=\lambda x_t (\alpha x_t + 1)^{-\beta}.
\end{equation*}
This is our Equation (\ref{system2}). In this paper, we normalise $\alpha=1$ to simplify the argument.

\section{Proofs of the main results}\label{SectionProof}
\subsection{Proof of Theorem~\ref{main1}}
First, we have $f'(x) = r e^{-x} - r x e^{-x} = r e^{-x}(1-x)$. Since $r>0$ and $e^{-x}>0$, we have $f'(x)>0$ if $0<x<1$, $f'(x)<0$ if $x>1$, and $f'(x)=0$ if $x=1$. So clearly $f$ is unimodal and has the (global) maximum at $x=1$. Using the notation in Introduction, we write $m=1$. Since we want to use Proposition~\ref{ChaosThm}, we need to push $f$ (or some restriction of it) into $\mathfrak{G}$. In particular, we need $f(x)>x$ for all $x\in (a,m]$ for some $a>0$. So, we need $f(m)>m$ at least, that is, $r>e$. We assume $r>e$ in the rest of the paper. 

Second, to push $f$ into $\mathfrak{G}$, we need to restrict the map $f$ to some closed interval $I=[a,b]$ so that $\restr{f}{I}$ is a map from $I$ to itself. In the following, we show that this can be done by setting $a=f^2(1)$ and $b=f(1)$. It is clear that $\restr{f}{I}(x)\leq b$ for any $x\in I$. So, we just need to show that $a\leq \restr{f}{I}(x)$ for any $x\in I$. Since the map $\restr{f}{I}$ is unimodal, the minimum is attained at $x=a$ or at $x=b$. So we have to show that $a\leq f(a)$ and $a\leq f(b)$. The second inequality is trivial, so we focus on $a\leq f(a)$, that is, $f^2(1) \leq f^3(1)$. By a direct calculation, we see that the last inequality is equivalent to $\ln{r}-r^2 e^{\frac{-r}{e}-1}>0$. Now, a numerical calculation by Python gives $r>e$ (precise to ten decimal places). 

To push $\restr{f}{I}$ into $\mathfrak{G}$, we also need: 1.~$f(b)<b$, 2.~$f(x)>x$ for all $x\in (a,m]$. A direct calculation shows that $f(b)<b$ is equivalent to $\frac{1}{e}-\frac{\ln{r}}{r}>0$. An easy calculus shows that $\frac{1}{e}-\frac{\ln{r}}{r}> 0$ for any $r>0$ except $r=e$ and it attains its minimum value $0$ at $r=e$. We are left to show $f(x)>x$ for all $x\in (a,m]$. By an easy calculation, we see that $f(x)>x$ is equivalent to $r e^{-x}>1$. Since we want to show that the last inequality holds for any $x\in (a,m]$, it is enough to show that it holds for $x=m$. (since $r e^{-m}<r e^{-x}$) Now we have to show $r e^{-m} = \frac{r}{e} >1$, this is certainly true. We have shown the following:
\begin{lem}\label{FirstLemma}
If $r>e$, then $\restr{f}{I}\in \mathfrak{G}$. 
\end{lem}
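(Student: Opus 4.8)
\subsection*{Proof proposal for Lemma~\ref{FirstLemma}}

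The plan is to verify directly that $g:=\restr{f}{I}$, with $m=1$ and $I=[a,b]:=[f^{2}(1),f(1)]$, satisfies the two defining properties of $\mathfrak{G}$. Property~1 (the unimodal shape) is essentially already available: the identity $f'(x)=re^{-x}(1-x)$ shows that $f$ is strictly increasing on $(0,1]$ and strictly decreasing on $[1,\infty)$, so it suffices to confirm that $m=1$ lies in the interior of $I$, i.e.\ that $f^{2}(1)<1<f(1)$. The right-hand inequality is precisely $r/e>1$, that is, the standing hypothesis $r>e$; for the left-hand one I would substitute the closed form $f^{2}(1)=r^{2}e^{-r/e-1}$ and reduce $f^{2}(1)<1$ to a single inequality in $r$, to be confirmed by elementary means or, failing that, numerically.

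I would then establish the self-mapping property $f(I)\subseteq I$. The upper bound is global and therefore free: $f(x)=rxe^{-x}\le r/e=f(1)=b$ for every $x>0$, because $xe^{-x}$ is maximised at $x=1$. For the lower bound, observe first that $f(b)=f(f(1))=f^{2}(1)=a$, so the right endpoint already attains the bound with equality; since $g$ is unimodal on $I$, its minimum over $I$ is taken at an endpoint, so the only remaining point to verify is $f(a)\ge a$, equivalently $f^{3}(1)\ge f^{2}(1)$. Using $f(x)\ge x\iff re^{-x}\ge 1\iff x\le\ln r$ together with $f^{2}(1)=r^{2}e^{-r/e-1}$, this becomes $\ln r-r^{2}e^{-r/e-1}>0$. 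I expect this transcendental inequality to be the main obstacle: it does not appear to succumb to elementary calculus, so the plan is to fall back on a numerical verification that it holds exactly when $r>e$.

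The two remaining conditions are elementary. For $g(b)<b$, i.e.\ $f^{2}(1)<f(1)$, rearranging reduces the claim to $\frac{1}{e}-\frac{\ln r}{r}>0$; since $t\mapsto\frac{\ln t}{t}$ attains its unique maximum at $t=e$ with value $\frac{1}{e}$, this holds for every $r>0$ with $r\ne e$, in particular for $r>e$. For $g(x)>x$ on $(a,m]=(a,1]$, the inequality $f(x)>x$ is equivalent to $re^{-x}>1$, and as $re^{-x}$ is decreasing in $x$ it is enough to check it at $x=m=1$, where it reads $r/e>1$, which is true by hypothesis. Combining property~1 with the three inequalities $g(a)\ge a$, $g(b)<b$ and $g(x)>x$ on $(a,m]$ then yields $g\in\mathfrak{G}$, which is the statement of the lemma.

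In short, every step apart from the inequality $\ln r-r^{2}e^{-r/e-1}>0$ (and the companion reduction of $f^{2}(1)<1$, should it also resist an elementary argument) is a routine consequence of the sign of $f'$ and of standard estimates for $xe^{-x}$ and $(\ln t)/t$; the transcendental inequality deciding whether $I=[f^{2}(1),f(1)]$ is genuinely invariant under $f$ is where I anticipate the numerical computation being unavoidable.
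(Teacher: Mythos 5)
Your plan follows the paper's proof almost verbatim: same interval $I=[f^2(1),f(1)]$, same reduction of invariance to $f^3(1)\ge f^2(1)$, i.e.\ $\ln r-r^2e^{-r/e-1}\ge 0$, settled numerically, same $(\ln t)/t$ argument for $f(b)<b$, and same one-point check of $f(x)>x$ on $(a,m]$ at $x=m$. The one step you add --- and it is required by the definition of $\mathfrak{G}$ --- is $m\in(a,b)$, i.e.\ $f^2(1)<1<f(1)$, and this is where your proposal breaks: the inequality $f^2(1)<1$ is \emph{not} confirmable for all $r>e$. Indeed $f^2(1)=r^2e^{-r/e-1}$ equals $1$ at $r=e$, increases to $4/e\approx 1.47$ at $r=2e$, and only falls below $1$ at $r\approx 9.549$; this is precisely the content of Lemma~\ref{ThirdLemma} ($f^2(m)<m$ iff $r>9.549$, given $r>e$). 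Concretely, for $r=5$ one gets $I=[f^2(1),f(1)]\approx[1.46,1.84]\subset[1,\infty)$, on which $f$ is strictly decreasing, so there is no interior turning point and $\restr{f}{I}\notin\mathfrak{G}$.

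So the step you flagged as ``to be confirmed by elementary means or numerically'' would fail on $e<r\lesssim 9.549$, and with this choice of $I$ the conclusion itself only holds for $r>9.549$. To be fair, this is not a defect of your reasoning alone: the paper's own proof never checks $a<m$ at all, so your more careful bookkeeping exposes a gap in the paper rather than introducing a new one. The damage is limited --- the chaos statements only use $r\geq 16.999$, where $f^2(1)<1$ does hold --- and part (1) of Theorem~\ref{main1} can be salvaged for $e<r\le 9.549$ by choosing a different interval, e.g.\ $[c,f(1)]$ with any $c\in(0,1)$ satisfying $c\le f^2(1)$ (automatic in that range, since then $f^2(1)\ge 1$): one checks as you do that $f$ maps it into itself, that $m=1$ is interior, and that the sign conditions hold. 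Apart from this point, your verifications of invariance, of $f(b)<b$, and of $f(x)>x$ on $(a,m]$ are correct and coincide with the paper's.
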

Now we show that 
\begin{lem}\label{SecondLemma}
If $r>9.549$, then the set $\Pi=\{x\in [m,b]\mid f(x)\in [m,b] \textup{ and } f^2(x)=x\}$ is a singleton, namely, $\Pi=\{\ln{r}\}$ (the unique fixed point for the map $f$). 
\end{lem}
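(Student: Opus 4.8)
The plan is to analyze the set $\Pi$ by separating the fixed points of $f$ from the genuine period-two points, and then to show that when $r>9.549$ the only period-two point lying in $[m,b]$ with image in $[m,b]$ is actually the fixed point $\ln r$. First I would recall that the fixed points of $f(x)=rxe^{-x}$ solve $re^{-x}=1$, i.e.\ $x=\ln r$, so $\ln r$ is the unique positive fixed point; since $r>e$ we have $\ln r>1=m$, and one checks $\ln r\in[m,b]$ with $f(\ln r)=\ln r\in[m,b]$, so $\ln r\in\Pi$ always. The content of the lemma is that \emph{nothing else} is in $\Pi$, i.e.\ the map $f$ has no genuine $2$-cycle with both points — or at least the relevant point — inside $[m,b]$ once $r$ exceeds the stated threshold.

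The key steps, in order: (1) Observe that $f^2(x)=x$ factors as $(f(x)-x)\cdot Q(x)=0$, where the factor $f(x)-x$ contributes the fixed point and $Q(x)=0$ captures the true $2$-cycles; concretely, writing the $2$-cycle condition $re^{-x}re^{-f(x)}=1$ and using $f(x)=rxe^{-x}$, the genuine $2$-cycle points satisfy an equation of the form $\ln r - x + \ln r - rxe^{-x}=0$ after taking logs, i.e.\ $2\ln r = x + rxe^{-x}$. (2) A genuine $2$-cycle consists of one point on each side of $m=1$ (this is standard for unimodal maps: a $2$-cycle cannot have both points where $f$ is monotone), so if $x\in\Pi$ is not the fixed point then $x\in(m,b]$ and its partner $f(x)\in(a,m)$; but $\Pi$ additionally demands $f(x)\in[m,b]$, forcing $f(x)=m$ or putting us in a boundary situation. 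I would turn this into the sharp statement that a non-fixed element of $\Pi$ would require the $2$-cycle to straddle $m$ in a way incompatible with $f(x)\ge m$, \emph{unless} $r$ is small enough that the turning-point dynamics still allows it. (3) Reduce the nonexistence of such a point to a one-variable inequality in $r$ — presumably comparing $f^2(m)$, or the location where $f$ maps back into $[m,b]$, against $m$ — and then invoke a numerical computation in Python to pin down the threshold $r>9.549$, consistently with how the other bounds in the paper are obtained.

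The main obstacle I expect is step (2)–(3): carefully characterizing when a period-two orbit of the unimodal map $f$ can have its "right" point simultaneously satisfy $x\in[m,b]$ and $f(x)\in[m,b]$. The naive expectation is that $f([m,b])$ dips below $m$ (indeed $f(b)=f^2(1)<m$ is essentially what was shown via $\frac1e-\frac{\ln r}{r}>0$ combined with earlier estimates), so the subinterval of $[m,b]$ that $f$ maps back into $[m,b]$ is a proper subinterval near $m$, and I would need to show that the $2$-cycle equation $2\ln r = x+rxe^{-x}$ has no root there for $r>9.549$ while it does for smaller $r$. Establishing that this crossover is monotone in $r$ — so that "$>9.549$" is genuinely necessary and sufficient for $\Pi$ to be a singleton — is the delicate point, and the paper's stated methodology suggests this last comparison is where the numerical computation is unavoidable; an analytic proof would require controlling the sign of a transcendental expression in $r$ and $x$ jointly, which is exactly the kind of step the authors flag as too hard to do rigorously.
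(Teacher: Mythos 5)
Your reduction of $f^2(x)=x$ to $x(1+re^{-x})=2\ln r$ (equivalently $x+f(x)=2\ln r$) is exactly the paper's starting point, and your step (3) --- numerically deciding, as a function of $r$, whether this equation has a root in the part of $[m,b]$ that $f$ maps back into $[m,b]$ --- is in substance what the paper does: it plots the fixed-point and period-two branches of $x(1+re^{-x})=2\ln r$ in the $(x,r)$-plane and checks numerically that for $r>9.549$ the smaller period-two point lies below $m=1$, so neither point of the $2$-cycle can satisfy both membership conditions defining $\Pi$. (Note the lemma only asserts the ``if'' direction, so your worry about monotonicity of the crossover in $r$ is not actually needed.)

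However, step (2) contains a genuine error: it is not true that a $2$-cycle of a unimodal map must have one point on each side of the critical point $m$; what is forced is that it straddles the \emph{fixed point} $z=\ln r$, which here lies strictly to the right of $m$. For this map the fixed point period-doubles at $r=e^{2}\approx 7.39$, and for $e^{2}<r<9.549$ both points of the resulting $2$-cycle lie in $(m,b]$ with images in $[m,b]$ (since $p\geq m$ gives $q=f(p)\leq f(m)=b$ and $f(q)=p\geq m$), so both belong to $\Pi$ and the conclusion fails in that range --- this is precisely why the threshold $9.549$ appears. If your straddling claim were correct, the argument you sketch would prove $\Pi=\{\ln r\}$ for every $r$ past the period-doubling value, with no role for $9.549$ at all; the hedge ``unless $r$ is small enough'' does not repair this, because the rest of the step leans on that claim. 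The correct criterion is the one you only gesture at afterwards: a non-fixed element of $\Pi$ exists if and only if the smaller $2$-cycle point satisfies $p\geq m$, so the whole lemma reduces to numerically locating when $p$ crosses $m=1$, which is where $9.549$ comes from. A further small slip: $\tfrac{1}{e}-\tfrac{\ln r}{r}>0$ gives $f(b)<b$, not $f^{2}(1)<m$; the latter is the separate condition of Lemma~\ref{ThirdLemma}, equivalent to $r>9.549$, and the paper explicitly remarks that it does not know why that threshold coincides with the present one --- so it cannot be taken as ``essentially shown'' en route.
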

\begin{proof}
First, solving $f(x)=x$, we obtain $x=\ln{r}$ (the unique fixed point of $f$). We write $z=\ln{r}$. (Note that we have assumed $r>e$, so $z>0$ as it should be.) It is clear that $z\in \Pi$. Now an easy calculation shows that $f^2(x)=x$ is equivalent to $x(1+r e^{-x})=2\ln{r}$. Plotting the $(x,r)$ satisfying the last equation, we obtain Figure~\ref{fig1}. In Figure~\ref{fig1}, the blue curve gives the fixed point of $f$ for each $r$, and the red curve gives the period two points for each $r$. A numerical calculation gives that if $r>9.549$, then the smaller of the period two points is not in $[1, b)=[m,b)$.  
\begin{figure}[h]
	\begin{center}
    	\scalebox{.4}{\input{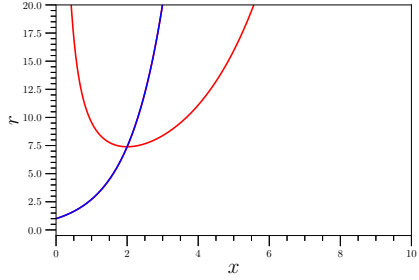}}
	\end{center}
    \caption{$x(1+r e^{-x})=2\ln{r}$}\label{fig1}
\end{figure}
\end{proof}

In view of Proposition~\ref{ChaosThm}, what we need to do now is to translate two conditions, $f^2(m)<m$ and $f^3 < \min\{x\in \Pi\}=z$ in terms of $r$. First, we show that
\begin{lem}\label{ThirdLemma}
$f^2(m)<m$ if and only if $9.549<r$.
\end{lem}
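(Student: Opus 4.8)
The plan is to reduce the inequality $f^2(m)<m$ to a single transcendental inequality in $r$, analyse the resulting one-variable function by elementary calculus to locate its sign changes, and then use a numerical computation only to pin down the relevant root to the stated precision.

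First I would substitute $m=1$. From $f(x)=rxe^{-x}$ we get $f(1)=r/e$, and hence $f^2(1)=f(r/e)=r\cdot(r/e)\cdot e^{-r/e}=(r^2/e)\,e^{-r/e}$. Thus $f^2(m)<m$ is equivalent to $r^2e^{-r/e}<e$, and since both sides are positive we may take logarithms: the inequality becomes $2\ln r-r/e-1<0$. Define $\phi(r):=2\ln r-\tfrac{r}{e}-1$ on $(0,\infty)$; recalling the standing assumption $r>e$, the lemma asserts that $\phi(r)<0$ exactly when $r>9.549$ (up to the numerical precision used throughout the paper).

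Next I would study $\phi$. Since $\phi'(r)=\tfrac{2}{r}-\tfrac{1}{e}$, the function $\phi$ is strictly increasing on $(0,2e)$ and strictly decreasing on $(2e,\infty)$, with its unique global maximum at $r=2e$. Direct evaluation gives $\phi(e)=2-1-1=0$ and $\phi(2e)=2\ln 2-1>0$, while $\phi(r)\to-\infty$ as $r\to\infty$. Consequently, restricted to $r>e$, the function $\phi$ is positive on $(e,2e]$, then strictly decreases from $\phi(2e)>0$ to $-\infty$; by the intermediate value theorem together with strict monotonicity on $(2e,\infty)$ there is a unique $r^\ast\in(2e,\infty)$ with $\phi(r^\ast)=0$, and for $r>e$ we have $\phi(r)<0$ if and only if $r>r^\ast$. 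This is the entire mathematical content of the lemma.

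It then remains only to identify $r^\ast$ numerically. A root-finding computation in Python for $\phi(r)=0$ returns $r^\ast=9.549\ldots$, which yields the stated equivalence. I do not anticipate a genuine obstacle here, since the monotonicity argument is routine; the only points that need a little care are (i) making sure the sign analysis correctly discards the boundary root $r=e$ (where $\phi$ vanishes but which lies outside the admissible range $r>e$), so that no spurious threshold appears, and (ii) checking that the decimal truncation $9.549$ is reported consistently with the strict inequality ``$9.549<r$'' in the statement, exactly as elsewhere in the paper (e.g.\ in Lemma~\ref{SecondLemma} and Theorem~\ref{main1}).
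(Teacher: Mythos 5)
Your reduction is exactly the paper's: $f^2(m)<m$ is equivalent to $r^2 e^{-r/e-1}<1$, which the paper resolves purely numerically as ``$r<e$ or $9.549<r$'' and then discards $r<e$ by the standing assumption $r>e$, just as you do. Your additional calculus on $\phi(r)=2\ln r-\tfrac{r}{e}-1$ (maximum at $r=2e$, $\phi(e)=0$, a unique root $r^\ast\approx 9.549$ beyond $2e$) is correct and only tightens the paper's argument by reducing the numerical step to locating that single root.
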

\begin{proof}
We compute that $f^2(m)<m$ is equivalent to $r^2 e^{-\frac{r}{e}-1}<1$. Using Python, the last inequality gives $r<e$ or $9.549<r$. Since we assumed $r>e$, we get the desired result. 
\end{proof}
Next, we show that
\begin{lem}\label{ForthLemma}
Let $r>9.549$. Then $f^3(m) < \min\{x\in \Pi\}=z$ if and only if $r>16.999$. 
\end{lem}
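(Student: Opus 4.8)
The plan is to reduce the claimed equivalence to a single transcendental inequality in $r$ and then pin down its solution set, combining a rigorous endpoint analysis with a numerical check in the spirit of the rest of the paper. Since we assume $r>9.549$, Lemma~\ref{SecondLemma} gives $\Pi=\{\ln r\}$, so $\min\{x\in\Pi\}=z=\ln r$ and the condition $f^3(m)<\min\{x\in\Pi\}$ is just $f^3(1)<\ln r$. With $m=1$, $f(1)=r/e$ and $f^2(1)=r^2e^{-r/e-1}$ (as in the proof of Lemma~\ref{ThirdLemma}), one more application of $f$ yields $f^3(1)=r^3e^{-r/e-1}\exp(-r^2e^{-r/e-1})$. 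Taking logarithms (legitimate since $r>e$ forces $\ln r>0$), the inequality $f^3(1)<\ln r$ is equivalent to $3\ln r-r/e-1-r^2e^{-r/e-1}<\ln\ln r$. Writing $\psi(r):=\ln\ln r-3\ln r+r/e+1+r^2e^{-r/e-1}$, the lemma becomes the assertion that, for $r>9.549$, one has $\psi(r)>0$ if and only if $r>16.999$.

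Next I would analyse the behaviour of $\psi$ at the two ends of $(9.549,\infty)$. As $r\downarrow 9.549$ we have $f^2(1)\to 1=m$ (this is exactly the boundary case of Lemma~\ref{ThirdLemma}), so by continuity $f^3(1)\to f(1)=r/e\approx 3.51$, which exceeds $\ln r\approx 2.26$; hence $\psi<0$ on a right-neighbourhood of $9.549$. As $r\to\infty$ the exponential dominates the polynomial, so $f^2(1)=r^2e^{-r/e-1}\to 0$ and hence $f^3(1)\le r^3e^{-r/e-1}\to 0$, while $\ln r\to\infty$; hence $\psi(r)\to\infty$. Thus $\psi$ is negative just above $9.549$ and positive for all large $r$, and it remains only to show that $\psi$ changes sign exactly once on $(9.549,\infty)$, at a value we will identify with $16.999$ to the stated precision.

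The main obstacle is precisely this uniqueness of the sign change. Because $f^3(1)$ contains the nested exponential $\exp(-r^2e^{-r/e-1})$, the derivative $\psi'$ is unwieldy and I do not expect a clean algebraic argument that $\psi$ is monotone on the relevant range. As elsewhere in the paper, the plan is to settle this numerically in Python: either by plotting $\psi$ on $(9.549,\infty)$, or by checking that the graphs of $y=f^3(1)$ and $y=\ln r$ intersect exactly once there, and by verifying that the unique crossing occurs at $r=16.999$ with $\psi<0$ to its left and $\psi>0$ to its right. Together with the endpoint analysis this gives $f^3(m)<\min\{x\in\Pi\}$ if and only if $r>16.999$ for $r>9.549$, as claimed.
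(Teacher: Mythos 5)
Your proposal is correct and follows essentially the same route as the paper: you compute $f^3(m)=r^3e^{-\frac{r}{e}-1-r^2e^{-\frac{r}{e}-1}}$, use $\Pi=\{\ln r\}$ to reduce the condition $f^3(m)<z$ to a single transcendental inequality in $r$, and locate the threshold $r\approx 16.999$ numerically with Python, which is exactly what the paper does. Your logarithmic rewriting via $\psi(r)$ and the endpoint/asymptotic sign analysis are harmless refinements, but the substantive step (that the sign changes exactly once, at $16.999$) is still settled numerically, just as in the paper's proof.
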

\begin{proof}
A direct calculation gives $f^3(m)-z<0$ is equivalent to $r^3 e^{-\frac{r}{e}-1-r^2 e^{-\frac{r}{e}-1}}-\ln{r}<0$. Now using Python, we obtain $r>16.999$. 
\end{proof}
The same argument yields
\begin{lem}\label{FifthLemma}
Let $r>9.549$. Then $f^3(m) \leq \max\{x\in \Pi\}=z$ if and only if $r\geq16.999$. 
\end{lem}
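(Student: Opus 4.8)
The plan is to reduce Lemma~\ref{FifthLemma} to the computation already carried out for Lemma~\ref{ForthLemma}, changing only the inequality from strict to non-strict. First I would invoke Lemma~\ref{SecondLemma}: since $r>9.549$, the set $\Pi$ is the singleton $\{z\}$ with $z=\ln r$, so in particular $\max\{x\in\Pi\}=\min\{x\in\Pi\}=z$. Hence the condition $f^3(m)\le\max\{x\in\Pi\}$ is literally the condition $f^3(m)\le z$, i.e. the non-strict version of the inequality treated in Lemma~\ref{ForthLemma}.

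Next I would repeat the direct calculation from the proof of Lemma~\ref{ForthLemma}. With $m=1$ we have $f(1)=r/e=re^{-1}$ and $f^2(1)=r^2e^{-r/e-1}$, and expanding $f^3(1)=f(f^2(1))$ one finds that $f^3(m)-z\le 0$ is equivalent to $\phi(r):=r^3 e^{-\frac{r}{e}-1-r^2 e^{-\frac{r}{e}-1}}-\ln r\le 0$, where $\phi$ is exactly the function whose strict negativity characterised $r>16.999$ in Lemma~\ref{ForthLemma}. So the only thing left is to compare the solution sets of $\phi(r)<0$ and $\phi(r)\le 0$ on the interval $(9.549,\infty)$. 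Numerically (by Python, as in the earlier lemmas) $\phi$ is positive on $(9.549,16.999)$, vanishes at $r=16.999$ up to the displayed precision, and is negative on $(16.999,\infty)$; therefore $\phi(r)\le 0$ holds precisely when $r\ge 16.999$, which is the claimed equivalence, while $\phi(r)<0$ holds precisely when $r>16.999$, recovering Lemma~\ref{ForthLemma}. This single additional boundary point is the entire difference between the two lemmas.

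The main obstacle is, as the paper already concedes, that the sign analysis of $\phi$ on $(9.549,\infty)$ — in particular that $\phi$ has a single zero there and that this zero equals $16.999$ — rests on numerical computation rather than on a closed-form argument; correspondingly, the ``if and only if'' at the threshold must be read as the numerical equality $\phi(16.999)=0$ to the stated number of digits. Granting that, the rest is routine: substituting $m=1$ into $f$, $f^2$, $f^3$, simplifying, and citing Lemma~\ref{SecondLemma} to identify $\max\{x\in\Pi\}$ with $z$.
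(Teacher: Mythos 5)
Your proposal is correct and follows essentially the same route as the paper, which disposes of this lemma with the remark that ``the same argument yields'' it from Lemma~\ref{ForthLemma}: identify $\max\{x\in\Pi\}$ with $z=\ln r$ via Lemma~\ref{SecondLemma}, reduce to the sign of $r^3 e^{-\frac{r}{e}-1-r^2 e^{-\frac{r}{e}-1}}-\ln r$, and settle the non-strict threshold numerically. Your explicit treatment of the boundary point $r=16.999$ (up to the stated precision) is a welcome clarification but not a different method.
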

Finally, combining Proposition~\ref{ChaosThm}, Lemmas~\ref{FirstLemma},~\ref{SecondLemma},~\ref{ThirdLemma},~\ref{ForthLemma}, and~\ref{FifthLemma}, we have proven Theorem~\ref{main1}.

\subsection{Proof of Theorem~\ref{main2}}
Our argument in this proof is similar to that in the proof of Theorem~\ref{main1}, so just give a sketch. First, we have $h'(x)=\lambda \beta(x+1)^{-\beta-1}\left(x+1 -\beta x\right)$, so $h'(x)>0$ if $0<x<\frac{1}{\beta-1}$, $h'(x)<0$ if $\frac{1}{\beta-1}<x$, and $h'(x)=0$ if $x=\frac{1}{\beta-1}$. Thus $h$ is unimodal and takes its maximum at $x=\frac{1}{\beta-1}$. We set $m=\frac{1}{\beta-1}$. Since we want $h$ to be unimodal on $x>0$, we need $m>0$, that is, $\beta>1$. 

Next, we compute the following (since we need these below):
\begin{alignat*}{2}
h(m)&=\frac{\lambda\left(1+\frac{1}{\beta-1}\right)^{-\beta}}{\beta-1},\\
h^2(m)&=\frac{\lambda^2\left(1+\frac{1}{\beta-1}\right)^{-\beta}\left(\frac{\lambda\left(1+\frac{1}{\beta-1}\right)^{-\beta}}{\beta-1}+1\right)^{-\beta}}{\beta-1}, \\
\end{alignat*}
Since we want to push $h$ (or some restriction of it, say, to $[a,b]$) into $\mathfrak{G}$, we need $h(x)>x$ for all $x\in (a,m]$. Therefore we need $h(m)>m$ at least. Solving this, we obtain 
\begin{equation}
\lambda>\left(\frac{1}{\beta-1}+1\right)^\beta.\label{condition1} 
\end{equation}
In view of Proposition~\ref{ChaosThm}, we need $h^2(m)<m$ to generate a chaos. Solving this, we obtain
\begin{equation}
\lambda^2\left(\frac{\beta-1}{\beta}\right)^\beta\left(\frac{\lambda}{\beta-1}\left(\frac{\beta-1}{\beta}\right)^\beta+1\right)^{-\beta}<1.\label{condition2}
\end{equation}
Using python, we plot $(\beta,\lambda)$ satisfying Equations (\ref{condition1}) and (\ref{condition2}) (Figure~\ref{fig2}). 
\begin{figure}[h!]
	\begin{center}
    	\scalebox{.4}{\input{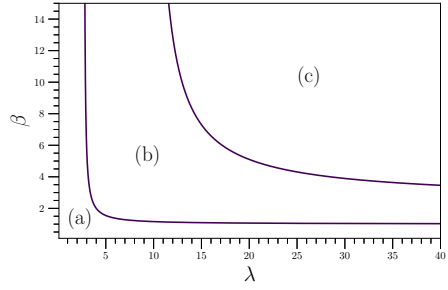}}
	\end{center}
    \caption{$h(m)>m: (b)+(c)$ and $h^2(m)<m: (a)+(c)$}\label{fig2}
\end{figure}
In Figure~\ref{fig2}, regions (b) and (c) correspond to $h(m)>m$, and regions (a) and (c) correspond to $h^2(m)<m$. So, to obtain a chaotic behaviour (using Proposition~\ref{ChaosThm}), we need a $(\lambda, \beta)$ pair in region (c).  

Now, we set $E=[a,b]=[h^2(m), h(m)]$, and consider $\restr{h}{E}$. To make $\restr{h}{E}$ a function from $E$ to itself, by the same argument as in the last example, we need $h^2(m)\leq h^3(m)$. Here, we compute 
\begin{alignat*}{2}
h^3(m)&=\frac{\lambda^3\left(1+\frac{1}{\beta-1}\right)^{-\beta}\left(\frac{\lambda\left(1+\frac{1}{\beta-1}\right)^{-\beta}}{\beta-1}+1\right)^{-\beta}\left(\frac{\lambda^2\left(1+\frac{1}{\beta-1}\right)^{-\beta}\left(\frac{\lambda\left(1+\frac{1}{\beta-1}\right)^{-\beta}}{\beta-1}+1\right)^{-\beta}}{\beta-1}+1\right)^{-\beta}}{\beta-1}.
\end{alignat*}
It is too hard (and messy) to find an algebraic relation between $\lambda$ and $\beta$ from $h^2(m)\leq h^3(m)$ (since $h(m)$ and $h^3(m)$ are too complicated algebraically), so we pick particular $\beta$ values (in region (c) in Figure~\ref{fig2}) and find the condition for $\lambda$ from $h^2(m)\leq h^3(m)$ using numerical computations. We obtain  
\begin{enumerate}
\item{if $\beta=5$, then $h(m)=0.08192\lambda$, $h^2(m)=\frac{0.08192\lambda^2}{(0.08192\lambda+1)^5}$, $h^3(m)=\frac{0.08192\lambda^3}{(0.08192\lambda+1)^5\left(\frac{0.08192\lambda^2}{(0.08192\lambda+1)^5}+1\right)^5}$.}
\item{if $\beta=10$, then $h(m)=0.03874\lambda$, $h^2(m)=\frac{0.03874\lambda^2}{(0.03874\lambda+1)^{10}}$,  $h^3(m)=\frac{0.03874\lambda^3}{(0.03874\lambda+1)^{10}\left(\frac{0.03874\lambda^2}{(0.03874\lambda+1)^{10}}+1\right)^{10}}$.}
\item{if $\beta=15$, then $h(m)=0.02537\lambda, h^2(m)=\frac{0.02537\lambda^2}{(0.02537\lambda+1)^{15}}$,  $h^3(m)=\frac{0.02537\lambda^3}{(0.02537\lambda+1)^{15}\left(\frac{0.02537\lambda^2}{(0.02537\lambda+1)^{15}}+1\right)^{15}}$.}
\end{enumerate}
Now a numerical calculation shows that the relation $h^2(m)\leq h^3(m)$ is equivalent to $\lambda\geq 3.052$ if $\beta=5$, $\lambda\geq 2.868$ if $\beta=10$, and $\lambda\geq 2.815 $ if $\beta=15$.  

Next, to push $\restr{h}{E}$ into $\mathfrak{G}$, we need: 1.~$h(b)<b$, 2.~$h(x)>x$ for all $x\in (a,m]$. By a numerical calculation, we see that $h(b)<b$ (that is $h^2(m)<h(m)$) holds for any $\lambda>0$ and for $\beta=5$ (or $\beta=10$ or $\beta=15$).
This proves the part 1. Now for part 2, we have that $h(x)>x$ is equivalent to $\lambda(x+1)^{-\beta}-1>0$. Since we need to show that $h(x)>x$ for any $x\in (a,m]$, it is enough to show that $\lambda(m+1)^{-\beta}-1=\lambda(\frac{1}{\beta-1}+1)^{-\beta}-1>0$. 
Solving the last inequality numerically, we have $\lambda>3.052$ if $\beta=5$, $\lambda>2.868$ if $\beta=10$, and $\lambda>2.815$ if $\beta=15$. So far we have proven:
\begin{lem}\label{firstLem2}
Let $\beta=5$ ($\beta=10$ or $\beta=15$). If $\lambda>3.052$ ($\lambda>2.868$ or $\lambda>2.815$ respectively), then $\restr{h}{E}\in \mathfrak{G}$. 
\end{lem}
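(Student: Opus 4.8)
The plan is to check, condition by condition, that $\restr{h}{E}$ satisfies the two defining requirements of $\mathfrak{G}$, following the template of the proof of Lemma~\ref{FirstLemma}. Recall that $h$ is unimodal on $(0,\infty)$, strictly increasing on $(0,m)$ and strictly decreasing on $(m,\infty)$, with $m=\tfrac{1}{\beta-1}>0$ for each of $\beta=5,10,15$. Taking $E=[a,b]=[h^2(m),h(m)]$, the first order of business is to verify that $m$ lies strictly inside $(a,b)$, for only then is $\restr{h}{E}$ of the required increasing-then-decreasing shape. The right endpoint is easy: $b=h(m)>m$ is exactly inequality~(\ref{condition1}), and since $\tfrac{1}{\beta-1}+1=\tfrac{\beta}{\beta-1}$ this reads $\lambda>\bigl(\tfrac{\beta}{\beta-1}\bigr)^{\beta}$, whose value is $1.25^{5}\approx 3.052$, $(10/9)^{10}\approx 2.868$, $(15/14)^{15}\approx 2.815$ for $\beta=5,10,15$ respectively — i.e.\ precisely the hypotheses of the lemma. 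The left endpoint inequality $a=h^2(m)<m$ is inequality~(\ref{condition2}); I would confirm it by the same Python computation that produces Figure~\ref{fig2}, checking that the relevant $(\lambda,\beta)$ pairs lie in region~(c). With $a<m<b$ in hand, condition~1 of $\mathfrak{G}$ is immediate from the unimodality of $h$ (note $a=h^2(m)>0$).

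For condition~2 I would treat the three remaining requirements in turn. The inequality $h(x)>x$ on $(a,m]$ is equivalent to $\lambda(x+1)^{-\beta}>1$; since $x\mapsto\lambda(x+1)^{-\beta}$ is strictly decreasing it suffices to verify it at $x=m$, i.e.\ $\lambda\bigl(\tfrac{1}{\beta-1}+1\bigr)^{-\beta}>1$, which is again $\lambda>\bigl(\tfrac{\beta}{\beta-1}\bigr)^{\beta}$, the stated bound. Next, $h(b)<b$ means $h^2(m)<h(m)$, and this follows at once from $h(m)>m$ and the strict monotonicity of $h$ on $[m,\infty)$, namely $h^2(m)=h(h(m))<h(m)=b$. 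It then remains to check that $\restr{h}{E}$ maps $E$ into $E$ and that $h(a)\ge a$. Since $m$ is the global maximum, $h(x)\le h(m)=b$ for every $x\in E$; and since $\restr{h}{E}$ is unimodal its minimum over $[a,b]$ is attained at an endpoint, where $h(b)=h^2(m)=a$ and $h(a)=h^3(m)$. Hence both ``$h(E)\subseteq E$'' and ``$h(a)\ge a$'' reduce to the single inequality $h^3(m)\ge h^2(m)$.

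Everything thus funnels into $h^2(m)\le h^3(m)$, and this is where I expect the real obstacle: the closed forms of $h^2(m)$ and $h^3(m)$ recorded above are nested $\beta$-th powers of rational functions of $\lambda$, so solving $h^2(m)\le h^3(m)$ for $\lambda$ in closed form seems hopeless. I would instead fix each of $\beta=5,10,15$ (so that the coefficients $0.08192$, $0.03874$, $0.02537$ become explicit) and locate the root of $h^3(m)-h^2(m)$ numerically, obtaining the thresholds $\lambda\ge 3.052$, $\lambda\ge 2.868$, $\lambda\ge 2.815$. The closely related point that also needs numerical care — and which I would state explicitly — is that the chosen bound on $\lambda$ must genuinely force $h^2(m)<m$, i.e.\ place $(\lambda,\beta)$ in region~(c) of Figure~\ref{fig2}; at $\lambda=\bigl(\tfrac{\beta}{\beta-1}\bigr)^{\beta}$ one has $h(m)=h^2(m)=h^3(m)=m$, and the computation must show all the relevant inequalities opening in the correct direction as $\lambda$ increases. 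Once these numerical facts are in place, collecting $a<m<b$, the two inclusions via $h^3(m)\ge h^2(m)$, and $h(b)<b$ together with $h(x)>x$ on $(a,m]$ via the clean bound $\lambda>\bigl(\tfrac{\beta}{\beta-1}\bigr)^{\beta}$ yields $\restr{h}{E}\in\mathfrak{G}$, which is the assertion of the lemma.
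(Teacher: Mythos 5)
Your overall skeleton matches the paper's: reduce the self-map property of $\restr{h}{E}$ to the single inequality $h^2(m)\leq h^3(m)$ (checked numerically at $\beta=5,10,15$ to give the thresholds $3.052$, $2.868$, $2.815$), verify $h(b)<b$, and verify $h(x)>x$ on $(a,m]$ by checking it at $x=m$, which reproduces $\lambda>\bigl(\tfrac{\beta}{\beta-1}\bigr)^{\beta}$. Your monotonicity argument for $h(b)<b$ (apply the decreasing map $h$ to $h(m)>m$) is in fact cleaner than the paper's purely numerical check of that point.

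However, there is a genuine gap at the step where you assert that $a=h^2(m)<m$ "would be confirmed" by the Figure~\ref{fig2} computation, i.e.\ that the hypothesized $(\lambda,\beta)$ pairs lie in region (c). They do not: by the paper's own Lemma~\ref{secondLem2} (equivalently inequality (\ref{condition2})), $h^2(m)<m$ holds only for $\lambda>20.45$ when $\beta=5$ (resp.\ $12.97$, $11.57$), whereas the lemma you are proving assumes only $\lambda>3.052$ (resp.\ $2.868$, $2.815$). Concretely, for $\beta=5$, $\lambda=10$ one has $m=0.25$, $h(m)=0.8192$, $h^2(m)\approx 0.411>m$, so the pair sits in region (b), $E\subset[m,\infty)$, and $\restr{h}{E}$ is strictly decreasing with no interior turning point. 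So the numerical confirmation you defer to would fail, and your argument cannot be completed as written on the range $3.052<\lambda\leq 20.45$ (and similarly for the other $\beta$). Note that the paper's own proof never verifies $m\in(a,b)$ at all — it only checks the self-map condition, $h(b)<b$, and $h(x)>x$ on $(a,m]$ — so your extra care actually exposes a real issue with the lemma as literally stated (condition 1 of $\mathfrak{G}$ demands an interior maximum); but as a proof of the stated lemma your proposal stalls precisely at the step you flagged, since the needed inequality $h^2(m)<m$ is simply false under the lemma's hypotheses and only becomes true at the much larger thresholds used later in Lemmas~\ref{secondLem2}--\ref{fifthLem2}.
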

Note that from Figure~\ref{fig2}, we see that $\lambda$ values in Lemma~\ref{firstLem2} are too small to generate chaos. Now we consider two conditions in Proposition~\ref{ChaosThm}, namely, 1.~$h^2(m)<m$ and 2.~$h^3(m)<\min\{x\in \Pi \}$. We consider condition~1 first. Using Equation~\ref{condition2} and Python, we obtain
\begin{lem}\label{secondLem2}
Let $\beta=5$ ($\beta=10$ or $\beta=15$). Then $h^2(m)<m$ if and only if $\lambda>20.45$ ($\lambda>12.97$ or $\lambda>11.57$ respectively).
\end{lem}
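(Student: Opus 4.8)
The plan is to follow the proof of Lemma~\ref{ThirdLemma} essentially verbatim in structure. Since $m=\frac{1}{\beta-1}$, Equation~(\ref{condition2}) says that $h^2(m)<m$ is equivalent to
\[
\lambda^2\left(\frac{\beta-1}{\beta}\right)^{\beta}\left(\frac{\lambda}{\beta-1}\left(\frac{\beta-1}{\beta}\right)^{\beta}+1\right)^{-\beta}<1.
\]
Introduce the constant $c_\beta:=\frac{1}{\beta-1}\left(\frac{\beta-1}{\beta}\right)^{\beta}$; numerically $c_5=0.08192$, $c_{10}=0.03874$, $c_{15}=0.02537$, exactly the coefficients already recorded above. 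Then $h(m)=c_\beta\lambda$ and $h^2(m)=c_\beta\lambda^2(c_\beta\lambda+1)^{-\beta}$, and since $(c_\beta\lambda+1)^{\beta}>0$ the displayed inequality is equivalent to the single polynomial inequality $(\beta-1)\,c_\beta\,\lambda^{2}<(c_\beta\lambda+1)^{\beta}$. For $\beta=5,10,15$ this is a polynomial inequality in $\lambda$ of degree $5$, $10$, $15$; feeding it to Python gives the crossing values $\lambda=20.45$, $12.97$, $11.57$ respectively.

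What remains is to check that this crossing genuinely yields an ``if and only if''. Here I would record the elementary derivative
\[
\frac{d}{d\lambda}\,h^{2}(m)=c_\beta\lambda\,(c_\beta\lambda+1)^{-\beta-1}\bigl[\,2-(\beta-2)c_\beta\lambda\,\bigr],
\]
valid for $\beta>2$, which shows that on $\lambda>0$ the map $\lambda\mapsto h^{2}(m)$ strictly increases on $\bigl(0,\frac{2}{(\beta-2)c_\beta}\bigr)$ and strictly decreases afterwards. For $\beta=5,10,15$ the turning point $\frac{2}{(\beta-2)c_\beta}$ is approximately $8.1$, $6.5$, $6.1$. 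On the other hand, at $\lambda_0:=\left(\frac{1}{\beta-1}+1\right)^{\beta}$ (i.e.\ $3.052$, $2.868$, $2.815$) Equation~(\ref{condition1}) holds with equality, so $h(m)=m$ and hence $h^{2}(m)=m$ there. Since $\lambda_0$ lies below the turning point, $h^{2}(m)-m$ starts at $0$ at $\lambda_0$, is strictly positive on $(\lambda_0,\,20.45)$ (resp.\ $(\lambda_0,\,12.97)$, $(\lambda_0,\,11.57)$), and strictly negative beyond; this is consistent with regions (a) and (c) of Figure~\ref{fig2}. Under the standing hypothesis that $\restr{h}{E}\in\mathfrak{G}$, which by Lemma~\ref{firstLem2} forces $\lambda>\lambda_0$, we conclude that $h^{2}(m)<m$ if and only if $\lambda>20.45$ (resp. $\lambda>12.97$, $\lambda>11.57$). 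As in Lemma~\ref{ThirdLemma}, without that hypothesis the condition would instead read ``$\lambda<\lambda_0$ or $\lambda$ exceeds the stated bound''.

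As throughout the paper, the main obstacle is simply that this is not rigorous: the bounds $20.45$, $12.97$, $11.57$ are obtained by numerical root-finding for polynomials of fairly high degree, and one leans on that computation (or on the monotonicity bookkeeping above) to be confident that there is exactly one relevant sign change and that we are on the correct side of it. Beyond the substitution and Proposition~\ref{ChaosThm}, no new idea is needed.
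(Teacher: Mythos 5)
Your proposal is correct and follows essentially the same route as the paper, which simply reduces $h^2(m)<m$ to Equation~(\ref{condition2}) and solves it numerically with Python for $\beta=5,10,15$. Your additional monotonicity bookkeeping (the sign of $\frac{d}{d\lambda}h^2(m)$ and the equality $h^2(m)=m$ at $\lambda_0=(\frac{\beta}{\beta-1})^\beta$), together with the explicit remark that the small-$\lambda$ branch is excluded only by the standing hypothesis $\lambda>\lambda_0$, is a welcome tightening of what the paper leaves implicit, exactly as in its treatment of Lemma~\ref{ThirdLemma}.
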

Next we look at condition~2 and prove:
\begin{lem}\label{thirdLem2}
Let $\beta=5$ ($\beta=10$ or $\beta=15$). If $\lambda>20.45$ ($\lambda>12.97$ or $\lambda>11.57$ respectively), then the set $\Pi$ is a singleton, namely, $\Pi=\{\lambda^{\frac{1}{\beta}}-1\}$ (the unique fixed point of $h$).
\end{lem}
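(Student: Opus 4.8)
The plan is to follow the proof of Lemma~\ref{SecondLemma} essentially verbatim, with $h$ in place of $f$. First I would solve $h(x)=x$, i.e.\ $\lambda(x+1)^{-\beta}=1$, which has the unique positive root $z:=\lambda^{1/\beta}-1$ (positive because the hypothesis forces $\lambda>\bigl(1+\tfrac{1}{\beta-1}\bigr)^{\beta}>1$, cf.\ Equation~(\ref{condition1})). Since $z=h(z)\le\max h=h(m)=b$ and $z>m$ (the latter being exactly Equation~(\ref{condition1})), we get $z\in[m,b]$ and $h(z)=z\in[m,b]$, so $z\in\Pi$. It then remains to rule every genuine period-two point out of $\Pi$.

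For the period-two points I would record the algebraic simplification that, for $x>0$, writing $y=h(x)$, the equation $h^{2}(x)=x$ is equivalent to $(x+1)(y+1)=\lambda^{2/\beta}$ (multiply $y=\lambda x(x+1)^{-\beta}$ by $x=\lambda y(y+1)^{-\beta}$ and cancel $xy\neq 0$; the converse is a direct check), hence to
\begin{equation*}
(x+1)+\lambda x(x+1)^{1-\beta}=\lambda^{2/\beta}.
\end{equation*}
For each of $\beta=5,10,15$ I would substitute and plot the locus of $(\lambda,x)$ satisfying this relation, as in Figure~\ref{fig1}: one branch is the fixed point $z(\lambda)$, and, once $\lambda$ is large enough, the other branch gives the two period-two points $x_{-}<z<x_{+}$. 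The content of the plot (or of a direct numerical check) is that $x_{-}$ drops below $m=\tfrac{1}{\beta-1}$ precisely when $\lambda$ exceeds $20.45$ ($12.97$, $11.57$). This matches Lemma~\ref{secondLem2}: since $m\neq z$, one has $h^{2}(m)=m$ iff $m$ is a period-two point iff $m=x_{-}$, so $x_{-}=m$ holds exactly at the threshold, and for larger $\lambda$ the plot shows $x_{-}<m$.

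Finally, for $\lambda$ above the threshold, $x_{-}<m$ gives $x_{-}\notin[m,b]$, so $x_{-}\notin\Pi$; and for the partner point, $h(x_{+})=x_{-}<m$, so $x_{+}$ violates the requirement $h(x_{+})\in[m,b]$ and $x_{+}\notin\Pi$ either. Hence $\Pi=\{z\}=\{\lambda^{1/\beta}-1\}$, as claimed. The main obstacle, as elsewhere in the paper, is that the decisive inequality --- that $x_{-}<m$ for every $\lambda$ above the stated threshold, and that this threshold coincides with the one in Lemma~\ref{secondLem2} --- is confirmed numerically rather than in closed form, since $h$, $h^{2}$ and the period-two locus are algebraically unwieldy for these $\beta$; one must also rely on the plot to see that $h^{2}$ has no solutions of $h^{2}(x)=x$ in $[m,b]$ beyond $z$, $x_{-}$, $x_{+}$.
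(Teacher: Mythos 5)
Your proposal is correct and follows essentially the same route as the paper: reduce $h^2(x)=x$ to the relation $\lambda x(x+1)^{1-\beta}+x+1=\lambda^{2/\beta}$, plot the fixed-point and period-two branches, and check numerically that the smaller period-two point falls below $m$ exactly when $\lambda$ exceeds $20.45$ ($12.97$, $11.57$), so that $\Pi=\{\lambda^{1/\beta}-1\}$. You are somewhat more explicit than the paper in verifying $z\in\Pi$, in discarding the larger period-two point via $h(x_+)=x_-<m$, and in observing that $h^2(m)=m$ forces $m=x_-$, which in fact explains the coincidence of thresholds with Lemma~\ref{secondLem2} that the paper's Remark leaves unexplained.
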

\begin{proof}
First, solving $h(x)=x$, we obtain the unique fixed point $x=\lambda^{\frac{1}{\beta}}-1$ of $h$. We write $z$ for this fixed point. In the following, we consider each $\beta$ value separately. We consider $\beta=5$ first. We see that $f^2(x)=x$ is equivalent to $\lambda^{\frac{2}{5}}=\lambda x (x+1)^{-4}+x+1$. Plotting $(x,\lambda$ satisfying this equation, we obtain Figure~\ref{fig3} where the blue curve represents the set of fixed points of $h$, and the red curve represents the set of period two points of $h$. 
\begin{figure}[h!]
	\begin{center}
    	\scalebox{.4}{\input{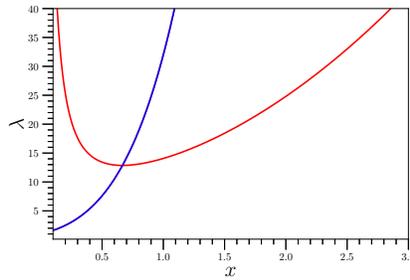}}
	\end{center}
    \caption{$f^2(x)=x$ for $\beta=5$}\label{fig3}
\end{figure}
From Figure~\ref{fig3}), we see that if $\lambda$ is large enough, then the smaller of the period two points of $h$ is less than $m=0.25$. By a numerical computation, we checked that this happens if $\lambda>20.45$. Since $\Pi$ clearly contains $z$, we have shown that if $\lambda>20.45$, then $\Pi=\{z\}$.   

The situation for $\beta=10$ or $\beta=15$ is similar. We just record that if $\beta=10$, then $f^2(x)=x$ is equivalent to $\lambda^{\frac{1}{5}}=\lambda x (x+1)^{-9}+x+1$ and if $\beta=10$ then that is equivalent to $\lambda^{\frac{2}{15}}=\lambda x (x+1)^{-14}+1$. By the same argument as that for the $\beta=5$ case, a numerical method yields that if $\beta=10$ and $\lambda>12.97$ (or $\beta=15$ and $\lambda>11.57$) then the set $\Pi$ is a singleton.  
\end{proof}

\begin{rem}
We do not know why the $\lambda$ values in Lemma~\ref{secondLem2} agree with those in Lemma~\ref{thirdLem2}. We have observed the same phenomenon in the last example ($r$ values in Lemmas~\ref{SecondLemma} and ~\ref{ThirdLemma}). Again, we do not know the reason for this. 
\end{rem}

Finally, we show that
\begin{lem}\label{forthLem2}
Let $\beta=5$ ($\beta=10$ or $\beta=15$) and let $\lambda>20.45$ ($\lambda>12.97$ or $\lambda>11.57$ respectively). Then $h^3(m)<\min\{x\in\Pi\}$ if and only if $\lambda>85.08$ ($\lambda>28.11$ or $\lambda>12.45$ respectively).
\end{lem}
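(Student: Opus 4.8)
The plan is to mimic, essentially verbatim, the proof of Lemma~\ref{ForthLemma}. The key simplification comes from Lemma~\ref{thirdLem2}: under the standing hypothesis $\lambda > 20.45$ (resp. $\lambda > 12.97$, $\lambda > 11.57$), the set $\Pi$ is the singleton $\{z\}$ with $z = \lambda^{1/\beta} - 1$ the unique fixed point of $h$. Hence $\min\{x \in \Pi\} = \lambda^{1/\beta} - 1$, and the inequality $h^3(m) < \min\{x \in \Pi\}$ reduces to the single scalar inequality
\[
h^3(m) - \lambda^{1/\beta} + 1 < 0 ,
\]
which, once $\beta$ is fixed, is an inequality in the one real variable $\lambda$.

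First I would substitute, for each of the three chosen values $\beta \in \{5, 10, 15\}$, the explicit closed form for $h^3(m)$ recorded just before Lemma~\ref{firstLem2} (with the numerical constants $0.08192$, $0.03874$, $0.02537$ playing the role of $\bigl(\tfrac{\beta-1}{\beta}\bigr)^{\beta}/(\beta-1) = h(m)/\lambda$). This turns the displayed inequality into a concrete transcendental inequality in $\lambda$ alone; for $\beta = 5$, for instance, it reads
\[
\frac{0.08192\,\lambda^{3}}{(0.08192\,\lambda + 1)^{5}\left(\dfrac{0.08192\,\lambda^{2}}{(0.08192\,\lambda + 1)^{5}} + 1\right)^{5}} - \lambda^{1/5} + 1 < 0 ,
\]
and similarly with the obvious substitutions for $\beta = 10$ and $\beta = 15$. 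Next I would study the left-hand side as a function of $\lambda$ on the admissible ray $\lambda > 20.45$ (resp. $\lambda > 12.97$, $\lambda > 11.57$): plotting it in Python (in the spirit of Figures~\ref{fig1} and~\ref{fig3}) and locating its unique sign change numerically yields the thresholds $\lambda = 85.08$, $\lambda = 28.11$, $\lambda = 12.45$. To get the full "if and only if" rather than mere sufficiency, I would additionally verify numerically that the function changes sign exactly once on the relevant ray (e.g. that it is strictly monotone there), so that being negative is equivalent to $\lambda$ exceeding the threshold. Running this same short recipe separately for each of the three $\beta$ values completes the proof.

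The main obstacle is exactly the step we cannot carry out algebraically: $h^3(m)$ is a triple composition of $h$, so the inequality is a genuinely unwieldy transcendental relation with no clean simplification, and hence the uniqueness-of-crossing (equivalently, the monotonicity) claim needed for the "only if" direction must be supported by numerical evidence rather than by a closed-form derivative computation — the same limitation already flagged for Lemmas~\ref{ThirdLemma}--\ref{FifthLemma}. A secondary bookkeeping point: one must confirm that each threshold found for $h^3(m) < z$ exceeds the corresponding threshold from Lemma~\ref{secondLem2} (namely $85.08 > 20.45$, $28.11 > 12.97$, $12.45 > 11.57$), so that the hypothesis $\Pi = \{z\}$ is genuinely in force throughout the range where the sign change occurs; this is immediate from the numbers.
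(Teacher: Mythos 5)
Your proposal is correct and follows essentially the same route as the paper: invoke Lemma~\ref{thirdLem2} to get $\Pi=\{z\}$ under the standing bound on $\lambda$, reduce the condition to the scalar inequality $h^3(m)<z$, and locate the threshold numerically for each of $\beta=5,10,15$ (your extra remarks on the unique sign change and on the thresholds exceeding those of Lemma~\ref{secondLem2} are sensible refinements of the same argument). Incidentally, your $z=\lambda^{1/\beta}-1$ is the correct fixed point, whereas the paper's proof writes $z=\lambda^{\frac{1}{5}}$, apparently dropping the $-1$.
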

\begin{proof}
We give argument for $\beta=5$. Other cases are similar. Since $\lambda>20.45$, we have $\Pi=\{z\}$. So, $h^3(m)<\min\{x\in \Pi\}$ is equivalent to $h^3(m)<z=\lambda^{\frac{1}{5}}$. Now, using the expression for $h^3(m)$ above, we obtain $\lambda>85.1$ numerically.  
\end{proof}
It is clear that a similar argument yields
\begin{lem}\label{fifthLem2}
Let $\beta=5$ ($\beta=10$ or $\beta=15$) and let $\lambda>20.45$ ($\lambda>12.97$ or $\lambda>11.57$ respectively). Then $h^3(m)\leq\max\{x\in\Pi\}$ if and only if $\lambda\geq85.08$ ($\lambda\geq28.11$ or $\lambda\geq12.45$ respectively).
\end{lem}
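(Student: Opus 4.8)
The plan is to deduce Lemma~\ref{fifthLem2} from Lemma~\ref{forthLem2} together with a short continuity-and-uniqueness observation, handling the three values of $\beta$ uniformly. Throughout fix one of $\beta=5,10,15$, and write $\lambda_{0}$ for the corresponding threshold from Lemma~\ref{thirdLem2} ($20.45$, $12.97$, $11.57$) and $\lambda^{*}$ for the corresponding chaos threshold ($85.08$, $28.11$, $12.45$). Since we are assuming $\lambda>\lambda_{0}$, Lemma~\ref{thirdLem2} gives $\Pi=\{z\}$ with $z=\lambda^{1/\beta}-1$ the unique fixed point of $h$; in particular $\max\{x\in\Pi\}=\min\{x\in\Pi\}=z$. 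Hence the claimed equivalence reduces to
\[
h^{3}(m)\le \lambda^{1/\beta}-1 \iff \lambda\ge\lambda^{*},
\]
and its strict counterpart $h^{3}(m)<\lambda^{1/\beta}-1 \iff \lambda>\lambda^{*}$ is precisely Lemma~\ref{forthLem2}.

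Next I would introduce, for $\lambda>\lambda_{0}$, the function $\varphi(\lambda):=h^{3}(m)-\bigl(\lambda^{1/\beta}-1\bigr)$, where $h(m),h^{2}(m),h^{3}(m)$ are the explicit elementary expressions recorded above. Being a composition of continuous functions of $\lambda$, $\varphi$ is continuous. By Lemma~\ref{forthLem2}, $\varphi(\lambda)<0$ iff $\lambda>\lambda^{*}$, equivalently $\varphi(\lambda)\ge0$ iff $\lambda_{0}<\lambda\le\lambda^{*}$. Continuity at $\lambda^{*}$ then forces $\varphi(\lambda^{*})=0$: it is $\ge 0$ by the previous sentence, and $\le 0$ as a one-sided limit of negative values from the right. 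A numerical check of the same nature as the one already used for Lemma~\ref{forthLem2} — and carried out in Python, as elsewhere in the paper — confirms that $\varphi$ is strictly decreasing on $(\lambda_{0},\infty)$, so $\lambda^{*}$ is its only zero and $\varphi(\lambda)>0$ for $\lambda_{0}<\lambda<\lambda^{*}$. Splitting into the cases $\lambda>\lambda^{*}$, $\lambda=\lambda^{*}$, $\lambda_{0}<\lambda<\lambda^{*}$ now yields $\varphi(\lambda)\le 0\iff\lambda\ge\lambda^{*}$, i.e.\ the reduced statement, hence Lemma~\ref{fifthLem2} for the chosen $\beta$. Repeating the numerical root-finding for $\beta=5,10,15$ produces the three thresholds $85.08,28.11,12.45$.

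The only genuinely non-routine point — and the one I would flag as the main obstacle — is the numerical verification that the equation $h^{3}(m)=\lambda^{1/\beta}-1$ has $\lambda^{*}$ as its \emph{unique} solution on $(\lambda_{0},\infty)$, so that the weak- and strict-inequality thresholds coincide and no intermediate range survives where equality could fail to propagate. Because the closed form of $h^{3}(m)$ is far too unwieldy for an algebraic monotonicity argument, this (like Lemma~\ref{forthLem2}) must be done numerically rather than analytically; everything else — the reduction via Lemma~\ref{thirdLem2}, continuity of $\varphi$, and the bookkeeping that upgrades the strict equivalence to the weak one — is immediate.
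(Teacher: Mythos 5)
Your proposal is correct, and it reaches the conclusion by a slightly different (and more explicit) route than the paper. The paper's own proof of this lemma is a one-line appeal to ``a similar argument'' as in Lemma~\ref{forthLem2}: since $\lambda>\lambda_0$ forces $\Pi=\{z\}$ by Lemma~\ref{thirdLem2}, the condition becomes $h^3(m)\leq z=\lambda^{1/\beta}-1$, and the paper simply re-runs the same Python computation with $\leq$ in place of $<$ to read off $\lambda\geq 85.08$ (resp.\ $28.11$, $12.45$). You instead keep the numerical work of Lemma~\ref{forthLem2} and upgrade it: continuity of $\varphi(\lambda)=h^3(m)-(\lambda^{1/\beta}-1)$ forces $\varphi(\lambda^*)=0$, and a numerically verified uniqueness of that zero (your monotonicity check) rules out $\varphi(\lambda)=0$ at some $\lambda\in(\lambda_0,\lambda^*)$, which is exactly what is needed for the ``only if'' direction of the weak-inequality statement. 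That last point is a genuine contribution of your write-up: the paper's phrasing tacitly assumes the strict and weak thresholds coincide, whereas you identify that this coincidence is an extra fact requiring verification (still numerical, since the closed form of $h^3(m)$ is intractable). So both arguments rest on the same computational input, but yours makes the logical bookkeeping between Lemmas~\ref{forthLem2} and~\ref{fifthLem2} explicit, at the cost of one additional numerical check (monotonicity or, more precisely, uniqueness of the zero of $\varphi$ on $(\lambda_0,\infty)$).
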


Combining Proposition~\ref{ChaosThm} and  Lemmas~\ref{firstLem2},~\ref{secondLem2},~\ref{thirdLem2},~\ref{forthLem2},~\ref{fifthLem2}, we obtain Theorem~\ref{main2}.

\bibliography{econbib}

\end{document}